\documentclass[11pt]{article}

\usepackage[utf8]{inputenc}
\usepackage[T1]{fontenc}
\usepackage[english]{babel}
\usepackage{amsmath,amssymb,amsthm}
\usepackage{booktabs}
\usepackage{graphicx}
\usepackage{multirow}
\usepackage{array}
\usepackage[margin=1in]{geometry}
\usepackage{tikz}
\usepackage{pgfplots}
\usepackage{algorithm}
\usepackage{algpseudocode}
\usepackage{caption}
\usepackage{subcaption}
\usepackage[numbers,sort&compress]{natbib}
\usepackage[colorlinks=true,linkcolor=blue,citecolor=blue,urlcolor=blue]{hyperref}
\usepackage{cleveref}

\pgfplotsset{compat=1.17}
\usetikzlibrary{positioning,arrows.meta,calc,fit,backgrounds,shapes.geometric,decorations.pathreplacing}

\newtheorem{proposition}{Proposition}
\newtheorem{remark}{Remark}

\DeclareMathOperator{\CV}{CV}
\DeclareMathOperator*{\argmin}{arg\,min}
\newcommand{\R}{\mathbb{R}}
\newcommand{\Ecos}{\mathcal{E}}

\tikzset{
  stage/.style={draw, rounded corners=2pt, align=center, font=\small,
                minimum height=8mm, inner xsep=4pt, fill=black!3},
  gate/.style={draw, rounded corners=2pt, align=center, font=\small,
               minimum height=8mm, fill=black!8},
  det/.style={draw, rounded corners=2pt, align=center, font=\small,
              minimum height=9mm, fill=black!5},
  flow/.style={-{Latex[length=2mm]}, thick},
  lbl/.style={font=\scriptsize\itshape}
}

\title{\bfseries ZAPs: A Reward Attribution Framework for DeFi Ecosystems with
Adversarial-Robust Scoring via Parallel Anomaly Ensemble Detection}

\author{%
\begin{tabular}{@{}ccc@{}}
Girish G N & Ashutosh Sahoo & Ajay Bhat \\
{\small\itshape Head of AI/ML} &
{\small\itshape Chief Executive Officer} &
{\small\itshape Chief Operating Officer} \\
{\small Zeru Finance} &
{\small Zeru Finance} &
{\small Zeru Finance} \\
\noalign{\vskip 4mm}
Akshay SP & Gurukiran S & Parag Paul \\
{\small\itshape Chief Product Officer} &
{\small\itshape Chief Technology Officer} &
{\small\itshape Advisor} \\
{\small Zeru Finance} &
{\small Zeru Finance} &
{\small Zeru Finance} \\
\noalign{\vskip 5mm}
\multicolumn{3}{c}{Dhanashekar Kandaswamy} \\
\multicolumn{3}{c}{\small\itshape The Ohio State University, USA}
\end{tabular}%
}

\date{}

\begin{document}

\maketitle

\begin{abstract}
Incentive programs dominate user acquisition in decentralized finance. Points
systems, retroactive airdrops, and liquidity-mining campaigns distribute
billions in token value against on-chain activity, yet nearly all attribute
rewards through naive heuristics---raw volume, transaction count, wallet
count---that bots and sybil operations manufacture at industrial scale. The
result is a systematic transfer from genuine participants to industrialized
farming. This paper presents ZAPs, an attribution framework that addresses the
failure jointly rather than piecewise. A composite activity score normalizes
user volume against each protocol's own population distribution via
high-percentile anchoring, bounding whale dominance while preserving
differentiation across the remaining population. A two-layer cross-domain
weighting mechanism---protocol-share within sector, multiplied by sector-share
within ecosystem---prices protocol significance and makes niche-protocol
farming structurally unprofitable; we prove that the maximum extractable reward
at any protocol is bounded by that protocol's global volume share, independent
of how cheaply local dominance is purchased. A four-layer adversarial detection
stack combines transaction-level gates, a parallel anomaly ensemble pairing a
one-class reconstruction model against an isolation-forest outlier detector,
post-distribution memory, and graph-based sybil clustering, all feeding
graduated, non-binary penalty multipliers. On a labeled corpus of 1{,}073
malicious wallets spanning 124{,}638 transactions, the ensemble reaches
$0.923 \pm 0.013$ ROC-AUC against $0.891 \pm 0.016$ for the reconstruction model
alone---but only when the isolation forest is fit on the benign population;
fitting it on the pooled population inverts its polarity and destroys the
ensemble gain entirely. Controlled adversarial simulation reduces reward capture
by 30--90\% across attack archetypes while legitimate-user scenarios shift by
1--8\%, and live production campaigns recorded a 56\% reduction in sybil
allocation, a 49\% increase in quality-wallet participation, and a 50\%
reduction in sell pressure. This work extends the zScore reputation lineage
\citep{zscore,deepreputation} from behavioral reputation to reward attribution,
where direct financial incentives make adversarial robustness a first-order
design requirement.
\end{abstract}

\section{Introduction}
\label{sec:intro}

Decentralized finance has largely replaced the loyalty and rewards apparatus of
traditional financial platforms with permissionless incentive design. Where a
centralized exchange grants fee tiers to verified customers, a DeFi protocol
distributes tokens to pseudonymous addresses based solely on observed on-chain
behavior. The scale of the substitution is substantial: retroactive airdrops and
points programs now routinely allocate token supplies worth hundreds of millions
of dollars, and eligibility is computed, not negotiated \citep{airdropwhy}.

The attribution mechanisms behind these distributions have not kept pace with
their economic weight. Most programs reward raw activity---total volume
transacted, number of transactions, number of participating wallets---under the
implicit assumption that activity proxies for genuine participation. That
assumption fails in an adversarial setting, and the setting is now reliably
adversarial. Farming operations manufacture qualifying activity at industrial
scale: spam bots generate thousands of low-value transactions, sybil operators
split capital across hundreds of coordinated wallets \citep{sybil}, and volume
routes through obscure protocols where small absolute amounts translate into
large relative shares. Each strategy converts a naive attribution heuristic
directly into extractable value.

Prior work, including the zScore lineage \citep{zscore,deepreputation}, has
addressed the adjacent problem of wallet \textit{reputation}---scoring the
quality and consistency of on-chain behavior for credit assessment,
segmentation, and trust modeling. Reputation scoring is not attacked the way
reward attribution is. A reputation score misjudging a wallet produces a
mislabeled profile; a reward function misjudging a wallet pays the attacker. The
direct financial payoff changes the threat model qualitatively.

This paper asks the question directly: can a reward-attribution mechanism
simultaneously reward genuine economic contribution in proportion to ecosystem
significance, and remain robust to adversarial manufacturing of that same
activity? We answer affirmatively and present ZAPs, a deployed framework whose
design and evaluation constitute the following contributions.

\begin{itemize}

\item \textbf{A composite reward score} (\Cref{sec:composite}) combining
percentile-normalized volume, engagement duration, and an extensible behavioral
quality signal, bounded so that no component dominates the composite through
scale alone.

\item \textbf{A two-layer cross-domain weighting mechanism}
(\Cref{sec:fairness}) establishing a formal notion of ecosystem-relative
economic significance. \Cref{prop:niche} proves that maximum extractable
attribution at a protocol is bounded by that protocol's global volume share,
which closes the most common structural exploit in volume-based programs.

\item \textbf{A four-layer adversarial detection stack}
(\Cref{sec:adversarial}) combining per-trade integrity gates, a parallel anomaly
ensemble, post-distribution memory, and graph-based cross-wallet sybil
clustering, feeding graduated penalty multipliers that bound false-positive harm
relative to binary exclusion.

\item \textbf{Empirical validation} (\Cref{sec:results,sec:ablation}) combining
deployment statistics, a held-out evaluation on 1{,}073 labeled malicious
wallets, controlled adversarial simulation, and live production campaigns. We
identify a training-regime condition on the isolation forest that is
outcome-determining for the ensemble and, to our knowledge, unreported: fitting
the detector on the pooled population rather than the benign population inverts
its discriminative polarity.

\end{itemize}

\section{Related Work}
\label{sec:related}

\subsection{Reputation and Scoring Systems in Crypto}

Wallet-level reputation modeling developed rapidly as DeFi shifted analytical
attention from platforms to participants. Udupi et al.\ introduced zScore, a
universal cross-protocol reputation system built on semi-supervised neural
scoring of on-chain activity \citep{zscore}. The same lineage was specialized to
decentralized exchanges, combining an interpretable behavioral blueprint with a
noise-augmented deep residual network to score liquidity providers and traders
\citep{deepreputation}. Lin et al.\ proposed RiskProp, propagating
de-anonymization-derived risk ratings across the Ethereum transaction graph
\citep{riskprop}, and Nguyen et al.\ applied PageRank to borrower
trustworthiness in lending protocols \citep{pagerep}. Jain et al.\ combined
on-chain behavioral metrics with machine learning in a general-purpose Web3
reputation engine \citep{wire}, while Packin and Lev-Aretz examined the opacity
and fairness risks such systems inherit \citep{blackbox}.

These systems answer the question \textit{how trustworthy or sophisticated is
this wallet?} The present work answers a different one: \textit{how much should
this wallet be paid?}---a question in which the scoring function itself becomes
the attack surface. ZAPs is positioned as the attribution-layer continuation of
the zScore lineage, inheriting its behavioral-modeling foundations while adding
the fairness weighting and adversarial machinery that direct financial
distribution requires.

\subsection{Market Structure and Volume Semantics}

The economics of decentralized exchange establish why raw volume is a poor
proxy for contribution. Malamud and Rostek showed that decentralized market
structure alters welfare properties relative to centralized venues
\citep{malamud}; Lo and Medda documented the emergence of automated market
makers as a distinct microstructure \citep{lomedda}; and Aspris et al.\
characterized the resulting trading environment and its weak surveillance
guarantees \citep{aspris}. Work on liquidity provision quantified the gap
between nominal volume and realized economic return
\citep{heimbach,aigner}. Volume transacted and value contributed diverge, and
an attribution mechanism that conflates them inherits the divergence.

\subsection{Incentive Design and Sybil-Resistant Distribution}

The mechanism-design literature offers principled treatments of fair allocation
under strategic behavior, most prominently quadratic funding, which reasons
explicitly about the tension between capital-weighted and participant-weighted
allocation \citep{quadratic}---the same tension any reward program must resolve.
Allen et al.\ analyze airdrops from an economic perspective, documenting their
role as ownership-distribution and marketing instruments \citep{airdropwhy}.

Academic treatment of \textit{adversarially robust} airdrop and points design is
thin. Practitioner literature abounds---post-mortems of farmed distributions, ad
hoc sybil-hunting bounties---but formal frameworks connecting attribution
formulas to attack economics are scarce. The gap is itself a finding: the
mechanisms distributing the largest sums in the ecosystem have received the
least formal scrutiny.

\subsection{Sybil Detection in Decentralized Systems}

Douceur's foundational result establishes that sybil attacks cannot be prevented
in permissionless systems without external identity anchors---only made
expensive \citep{sybil}. Social-graph approaches such as SybilGuard bound sybil
influence via trust-network topology \citep{sybilguard}, and Victor developed
address-clustering heuristics resolving multiple Ethereum addresses to common
controlling entities through deposit-address reuse and airdrop-claim patterns
\citep{clustering}. The clustering layer of \Cref{sec:sybil} extends this
tradition to reward attribution: first-funder provenance graphs and behavioral
uniformity scoring within candidate families detect coordinated multi-wallet
operations that per-wallet detectors cannot reach.

\subsection{Anomaly Detection Methods}

Isolation Forest isolates anomalies through random recursive partitioning,
exploiting the property that outliers require fewer splits to isolate
\citep{isolationforest}; it remains a strong baseline for unlabeled tabular
anomaly detection and serves as one of two detectors in our ensemble.
Autoencoder-based detection takes the complementary approach: a network trained
to reconstruct a reference distribution reconstructs out-of-distribution inputs
poorly \citep{autoencoder1,autoencoder2}. We invert the usual polarity, training
on malicious wallets so that \textit{low} reconstruction error indicates
membership in the malicious manifold. Chalapathy and Chawla survey the broader
deep anomaly-detection landscape \citep{deepsurvey}. In credit-scoring contexts,
Dastile et al.\ \citep{creditsurvey} and B\"{u}cker et al.\
\citep{credittransparency} emphasize the interpretability obligations of ML
scoring systems---a constraint addressed here through per-feature
reconstruction-error attribution.

\subsection{Positioning}

The individual ingredients of ZAPs---percentile normalization, anomaly
ensembles, multiplicative weighting, graph-based clustering---are each
established. The contribution is their joint composition into a
reward-attribution mechanism: nested cross-domain fairness weighting closing
structural formula-level exploits, and a four-layer detection stack closing
behavioral activity-level exploits, evaluated together against both attack
archetypes and legitimate-user controls.

\section{Methodology}
\label{sec:method}

\Cref{fig:pipeline} gives the end-to-end architecture. The remainder of this
section develops each stage.

\begin{figure}[t]
\centering
\begin{tikzpicture}[x=1mm,y=1mm]

\node[stage,text width=27mm] (raw)  at (0,0)  {Multi-chain\\transaction stream};
\node[gate, text width=38mm] (filt) at (40,0)
  {Ingestion gates\\{\scriptsize zero-value $\cdot$ approve/revoke $\cdot$ ceiling}};
\node[stage,text width=24mm] (pair) at (80,0) {$(w,p)$ pair\\decomposition};
\draw[flow] (raw) -- (filt);
\draw[flow] (filt) -- (pair);

\node[stage,text width=22mm] (vol)  at (-2,-26) {Volume\\{\scriptsize percentile}};
\node[stage,text width=22mm] (hold) at (23,-26) {Engagement\\{\scriptsize capped span}};
\node[stage,text width=18mm] (qual) at (46,-26) {Quality\\{\scriptsize zScore}};
\begin{scope}[on background layer]
\node[draw,rounded corners=2pt,fit=(vol)(hold)(qual),inner sep=3mm,fill=black!2]
  (comp) {};
\end{scope}
\node[font=\scriptsize\bfseries,anchor=south west] at (comp.north west)
  {Composite score $s_{w,p}$};

\node[stage,text width=52mm] (weight) at (22,-48)
  {Cross-domain weight\\[1mm]
   $\omega_p=\dfrac{V_p}{V_{\sigma(p)}}\cdot\dfrac{V_{\sigma(p)}}{V_{\Ecos}}
   =\dfrac{V_p}{V_{\Ecos}}$};

\node[det,text width=20mm] (l1) at (78,-20) {L1\\{\scriptsize Trade gates}};
\node[det,text width=20mm] (l2) at (78,-32) {L2\\{\scriptsize Anomaly ens.}};
\node[det,text width=20mm] (l3) at (78,-44) {L3\\{\scriptsize Dump memory}};
\node[det,text width=20mm] (l4) at (78,-56) {L4\\{\scriptsize Sybil cluster}};
\begin{scope}[on background layer]
\node[draw,dashed,rounded corners=2pt,fit=(l1)(l2)(l3)(l4),inner sep=2.6mm]
  (stack) {};
\end{scope}
\node[font=\scriptsize\bfseries,anchor=south,yshift=2.4mm] at (stack.north)
  {Adversarial stack};

\node[gate,text width=44mm] (pen) at (78,-70)
  {Graduated penalty $\mu(w)\in(0,1]$};
\node[stage,fill=black!10,text width=56mm] (out) at (34,-84)
  {$R_w=\mu(w)\displaystyle\sum_{p\in\mathcal{P}_w}s_{w,p}\,\omega_p$};

\draw[flow] (pair.south) -- ++(0,-5) -| (comp.north);
\draw[flow] (comp.south) -- (weight.north);
\draw[flow] (pair.east) -- ++(6,0) |- (stack.east);
\draw[flow] (stack.south) -- (pen.north);
\draw[flow] (pen.south) |- (out.east);
\draw[flow] (weight.west) -- ++(-12,0) |- (out.west);

\end{tikzpicture}
\caption{ZAPs attribution pipeline. The structural path (composite score,
cross-domain weighting) and the behavioral path (the four-layer adversarial
stack) run in parallel and compose multiplicatively at the attribution step.}
\label{fig:pipeline}
\end{figure}

\subsection{Data Collection and Preprocessing}
\label{sec:data}

The pipeline draws from five DeFi sectors---DEX swaps, lending, perpetual
futures, NFT trading, and staking---each carrying transaction semantics that a
single aggregate volume figure would erase before attribution begins. A swap, a
borrow, and a stake represent different economic commitments, and the framework
treats them as such. Protocols are assigned to sectors through a maintained
classification; protocols absent from the mapping are excluded rather than
defaulted, on the principle that an uncategorized protocol should not silently
receive credit.

Three filters encode the system's definition of economic signal. Zero-value
transactions are discarded: they commit no capital. Approval and revocation
operations are excluded, since they modify permissions rather than positions.
Transactions exceeding a normalization ceiling are dropped, because at that
scale records are overwhelmingly artifacts of pricing error or internal
bookkeeping, and their inclusion would distort the population distribution used
to score every other participant.

Attribution is computed per (wallet, protocol) pair, not per wallet. Let
$\mathcal{W}$ denote the wallet set, $\mathcal{P}$ the protocol set, and
$\sigma:\mathcal{P}\to\mathcal{S}$ the protocol-to-sector assignment. A user
active across several protocols receives independent scores at each, summed into
a single reward. The decomposition is load-bearing for the fairness mechanism of
\Cref{sec:fairness}: a wallet cannot farm a niche protocol and carry that credit
into a dominant sector. The weight is applied where the activity occurred.

\subsection{Composite Reward Score}
\label{sec:composite}

Each pair $(w,p)$ receives a score composed from three independent dimensions:
activity volume, temporal engagement, and behavioral quality. Writing $S$ for
the common per-dimension scale,
\begin{equation}
s_{w,p} \;=\; \alpha\, v_{w,p} \;+\; \beta\, h_{w,p} \;+\; \gamma\, q_{w,p},
\qquad \alpha+\beta+\gamma = 1,\quad \alpha,\beta,\gamma > 0,
\label{eq:composite}
\end{equation}
so that $s_{w,p}\in[0,S]$. The dimensions combine additively---none substitutes
for the others---and each is individually bounded, so no single component
dominates the composite through scale alone. The mixing coefficients
$(\alpha,\beta,\gamma)$ are calibrated empirically and withheld; the analysis
that follows depends only on their positivity and on the boundedness of each
term.

\paragraph{Volume.} Let $V_{w,p}$ be the wallet's dollar-denominated activity at
protocol $p$, and let $\Pi_p^{(\tau)}$ denote the $\tau$-th percentile of the
volume distribution over all participants at that protocol, for a high fixed
$\tau$. Then
\begin{equation}
v_{w,p} \;=\; S \cdot \min\!\left(\frac{V_{w,p}}{\Pi_p^{(\tau)}},\, 1\right).
\label{eq:volume}
\end{equation}
The reference is population-derived and per-protocol: a venue dominated by
institutional participants develops a high reference, one of small retail users
a low one, and identical dollar volumes score relative to the protocol's own
economic scale. The choice of a high percentile over the maximum or the mean is
deliberate. Max-normalization makes the reference a function of a single
outlier, so one anomalous wallet compresses every other score toward zero.
Mean-normalization places the reference inside the bulk of the distribution, so
the clip in \Cref{eq:volume} binds for a large fraction of participants and
differentiation collapses. A high percentile is robust to individual outliers
while remaining far enough into the tail that saturation affects only genuine
whales. \Cref{sec:ablation} quantifies both failure modes.

\paragraph{Engagement.} Let $\Delta t_{w,p}$ be the span between the wallet's
first and last interaction with $p$, and $T^{\ast}$ a design-chosen ceiling:
\begin{equation}
h_{w,p} \;=\; S \cdot \min\!\left(\frac{\Delta t_{w,p}}{T^{\ast}},\, 1\right).
\label{eq:hold}
\end{equation}
The ceiling screens out single-transaction tourists without imposing indefinite
lock-in. A wallet that appears once and vanishes contributes nothing here;
sustained participation saturates the cap.

\paragraph{Quality.} $q_{w,p}$ is an extensible slot for sector-specific
behavioral indicators---repayment discipline in lending, realized-PnL character
in perpetuals, token-quality mix in DEX activity---backed by the model of
\Cref{sec:behavioral}. In the reference deployment this component is held at a
calibrated baseline for all users, which provides a small floor of credit
independent of volume. \Cref{sec:limitations} treats the consequences.

\subsection{Cross-Domain Fairness Weighting}
\label{sec:fairness}

\Cref{eq:composite} answers a local question: how substantial was this activity
relative to this protocol? It cannot answer a global one: how significant is
this protocol to the ecosystem? Any mechanism that skips the second question is
structurally exploitable, because an attacker's optimal strategy becomes
locating the smallest protocol whose scoring reference can be dominated cheaply.

Let $V_p$ denote total volume at protocol $p$, $V_{\sigma(p)}=\sum_{p'
:\sigma(p')=\sigma(p)} V_{p'}$ total volume in its sector, and $V_{\Ecos}$ total
ecosystem volume. The effective weight is the product of two nested shares:
\begin{equation}
\omega_p
\;=\;
\underbrace{\frac{V_p}{V_{\sigma(p)}}}_{\text{protocol within sector}}
\cdot
\underbrace{\frac{V_{\sigma(p)}}{V_{\Ecos}}}_{\text{sector within ecosystem}}
\;=\;
\frac{V_p}{V_{\Ecos}}.
\label{eq:weight}
\end{equation}

The product telescopes to the protocol's global share, but the decomposition is
retained deliberately. A two-layer structure makes the fairness arithmetic
auditable at each level, allows a deployment to reweight a sector without
recomputing intra-sector proportions, and ensures the classification step and
the weighting step share a single reference point.

Total attribution for a wallet, over its active protocol set $\mathcal{P}_w$, is
\begin{equation}
R_w \;=\; \mu(w) \sum_{p \in \mathcal{P}_w} s_{w,p}\,\omega_p,
\label{eq:attribution}
\end{equation}
where $\mu(w) \in (0,1]$ is the penalty multiplier of
\Cref{sec:adversarial}.

\begin{proposition}[Niche-protocol unprofitability]
\label{prop:niche}
Fix any protocol $p$. The attribution any single wallet can obtain at $p$
satisfies
\[
s_{w,p}\,\omega_p \;\le\; S\cdot\frac{V_p}{V_{\Ecos}},
\]
independently of $V_{w,p}$, of the wallet's share of $p$, and of the cost of
acquiring that share.
\end{proposition}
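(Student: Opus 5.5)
The argument is a direct bounding of the two factors in the product $s_{w,p}\,\omega_p$. I will bound $s_{w,p}$ by $S$ using only the structure of \Cref{eq:composite}, and then identify $\omega_p$ with the global share via \Cref{eq:weight}. Neither bound involves $V_{w,p}$, so the independence claims follow at once.

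\textbf{Step 1: bound the composite score.} Each component of \Cref{eq:composite} lies in $[0,S]$. For volume, \Cref{eq:volume} gives $v_{w,p}=S\min(V_{w,p}/\Pi_p^{(\tau)},1)$. Here $V_{w,p}\ge 0$, because the ingestion gates of \Cref{sec:data} admit only nonnegative dollar-denominated activity, and $\Pi_p^{(\tau)}>0$. The clip then gives $0\le v_{w,p}\le S$ whatever the size of $V_{w,p}$. The engagement term $h_{w,p}$ is clipped in exactly the same way by \Cref{eq:hold}. The quality term $q_{w,p}$ is bounded in $[0,S]$ by the standing convention that every dimension lives on the common scale $S$; in the reference deployment it is in any case a fixed baseline. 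Because $\alpha,\beta,\gamma>0$ and $\alpha+\beta+\gamma=1$, the score $s_{w,p}$ is a convex combination of numbers in $[0,S]$, so $0\le s_{w,p}\le S$.

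\textbf{Step 2: evaluate the weight.} The nested product in \Cref{eq:weight} telescopes to $\omega_p=V_p/V_{\Ecos}\ge 0$. The sector total $V_{\sigma(p)}$ must be nonzero for this to make sense. I will note that any protocol receiving attribution has $V_p>0$, hence $V_{\sigma(p)}\ge V_p>0$ and $V_{\Ecos}>0$, so every quotient is well defined.

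\textbf{Step 3: multiply.} Since both factors are nonnegative, $s_{w,p}\omega_p\le S\cdot V_p/V_{\Ecos}$. The right-hand side depends only on the aggregates $V_p$ and $V_{\Ecos}$, which establishes the claimed independence.

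\textbf{Main obstacle.} The only point needing care is what ``independently of $V_{w,p}$'' means. The aggregates $V_p$ and $V_{\Ecos}$ do themselves contain $V_{w,p}$, so an attacker who inflates $V_{w,p}$ also inflates $V_p$. I will state explicitly that the bound is uniform over the wallet's own volume and over its share of $p$ once the protocol's total $V_p$ is given, and that the clip in \Cref{eq:volume} is what removes any dependence through the score. I will then remark that manufacturing volume to raise $V_p$ requires moving volume on the order of a global share of the ecosystem. That is the economic content, but it lies outside the inequality proper.
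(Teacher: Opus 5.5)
Your proposal is correct and follows the paper's proof: each of $v_{w,p},h_{w,p},q_{w,p}$ lies in $[0,S]$, the positive weights summing to one give $s_{w,p}\le S$, and substituting the telescoped weight $\omega_p=V_p/V_{\Ecos}$ yields the bound. Beyond the paper, you check that the quotients are well defined and that $\omega_p\ge 0$ (which is all the multiplication step needs), and you note that independence from $V_{w,p}$ holds only with $V_p$ fixed, since $V_p$ and $V_{\Ecos}$ themselves contain $V_{w,p}$; both are sound refinements the paper leaves implicit.
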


\begin{proof}
Each of $v_{w,p}, h_{w,p}, q_{w,p}$ lies in $[0,S]$ by
\Cref{eq:volume,eq:hold} and the construction of $q$. Since
$\alpha+\beta+\gamma=1$ with all coefficients positive, \Cref{eq:composite}
gives $s_{w,p}\le S$. Substituting the telescoped form of \Cref{eq:weight}
yields $s_{w,p}\,\omega_p \le S\,V_p/V_{\Ecos}$.
\end{proof}

\begin{remark}
The bound is the structural content of the mechanism. An attacker who
manufactures \emph{total} dominance of a protocol still cannot extract more
than that protocol's global volume share, so the return to capturing a venue is
governed by the venue's ecosystem weight rather than by how cheaply it can be
captured. Saturation in \Cref{eq:volume} makes the bound tight: once
$V_{w,p}\ge\Pi_p^{(\tau)}$, additional capital yields no additional volume
credit.
\end{remark}

The effect is sharpest numerically. Consider two protocols with comparable local
dominance in the reference deployment. Protocol~A holds a $19.4\%$ share of the
DEX sector, which carries $85.7\%$ of global volume, giving
$\omega_A \approx 0.166$. Protocol~B holds a $71.3\%$ share of the staking
sector---near-total local dominance---but staking carries $0.47\%$ of global
volume, giving $\omega_B \approx 0.0034$. Protocol~B's users dominate their
sector roughly three times more thoroughly and receive about one-fiftieth of the
weight.

\subsection{Deep Behavioral Quality Model}
\label{sec:behavioral}

The quality signal is backed by zScore, the behavioral engine behind Zeru's
reputation systems \citep{zscore,deepreputation}. The same engine that scores
wallet trustworthiness for credit assessment feeds the quality dimension of
ZAPs, ensuring that attribution and reputation share a behavioral substrate
rather than drifting into incompatible representations of the same wallet.

The model evaluates on-chain history across five dimensions---wealth magnitude,
behavioral consistency, protocol diversity, token-category diversity, and
gas-consumption behavior---each normalized against population-derived references
recomputed from periodic sampling of the live wallet base. A wallet's score
reflects its position relative to the current user distribution, not a static
benchmark that drifts into irrelevance as the population evolves.

Two modeling decisions bear directly on adversarial properties.

\paragraph{Frequency-adjusted consistency.} Raw flow volatility penalizes
high-frequency participants automatically: a market-making wallet executing many
transactions daily exhibits large absolute fluctuations as a mechanical
consequence of cadence, not as evidence of erratic behavior. Writing
$\bar{\varsigma}_w$ for the mean of the wallet's inbound and outbound flow
standard deviations and $\rho_w$ for its transaction rate, the model scores
\begin{equation}
\varsigma^{\text{eff}}_w \;=\; \frac{\bar{\varsigma}_w}{1+\ln(1+\rho_w)},
\qquad
c_w \;=\; c_{\min} + (C-c_{\min})\cdot
\mathrm{clip}\!\left(1-\frac{\varsigma^{\text{eff}}_w}{\Theta},\,0,\,1\right),
\label{eq:consistency}
\end{equation}
for a reference dispersion $\Theta$ and a guaranteed floor $c_{\min}$. Dividing
by a log-compressed transaction rate measures regularity conditional on the
wallet's natural operating tempo. A capital-intensity gate caps the consistency
contribution for wallets below a minimum lifetime value, which defeats the
complementary exploit of manufacturing behavioral regularity through dust
transactions.

\paragraph{Economic-substance weighting of diversity.} Protocol and token
diversity are not raw counts. Each token is assigned a tier weight
$\theta(t)$ reflecting economic substance---stablecoins, majors, and
liquid-staking derivatives at the top; established governance and blue-chip DeFi
assets next; unclassified assets at unit weight; and assets matching
points-, test-, and airdrop-voucher patterns discounted by more than an order of
magnitude relative to the top tier. The diversity input is
$\sum_{t \in \mathcal{T}_w}\theta(t)$ rather than $|\mathcal{T}_w|$, passed
through a saturating transform. An attacker who manufactures diversity by
deploying disposable contracts and minting valueless tokens generates a
quantity the model treats as noise.

A deterministic persona-tagging layer sits downstream of scoring. A stable hash
of the wallet's score bucket and identifier selects among candidate labels,
guaranteeing that repeated scoring of the same wallet produces identical
categorical output on any worker in a distributed deployment, without
coordination or shared state.

\subsection{Adversarial Detection}
\label{sec:adversarial}

The structural defenses of \Cref{sec:composite,sec:fairness} close formula-level
exploits. They cannot distinguish a human trader from a bot executing identical
nominal volume. ZAPs layers four defense classes against behavioral
manufacturing.

\subsubsection{Layer 1: Transaction-level gates}

Before any behavioral model fires, each transaction passes checks for signals
with no honest explanation. Same-block round-trip trades---a position opened and
closed within a single block, carrying zero net economic exposure---contribute
nothing to any score, irrespective of nominal magnitude. Fee-to-volume ratios
below a protocol's plausible range are treated identically, indicating rebate or
wash-trading mechanics rather than genuine capital commitment. These checks are
binary, deterministic, and applied upstream of everything else.

\subsubsection{Layer 2: Parallel anomaly ensemble}
\label{sec:ensemble}

Two detectors operate over a common wallet-level feature space
$x \in \R^{d}$, $d=10$, aggregating each wallet's full transaction history into
transaction count, active-day count, mean and standard deviation of transaction
value, total value, mean gas cost, distinct counterparty count, distinct token
count, token-transaction share, and mean daily transaction rate.

\paragraph{Detector A: one-class reconstruction model.}
Let $f_\theta = g \circ e$ with encoder $e:\R^{d}\to\R^{k}$ and decoder
$g:\R^{k}\to\R^{d}$, realized as $10 \to 64 \to 16 \to 64 \to 10$ with ReLU
activations and a linear output layer (\Cref{fig:ae}). Training minimizes
reconstruction error over a corpus of \emph{labeled malicious wallets only},
\begin{equation}
\theta^{\ast} = \argmin_{\theta}\; \frac{1}{|\mathcal{D}_{\text{mal}}|}
\sum_{x \in \mathcal{D}_{\text{mal}}} \big\| x - f_\theta(x) \big\|_2^2 ,
\label{eq:ae}
\end{equation}
by Adam with learning rate $10^{-3}$, batch size $128$, for $50$ epochs, with
inputs standardized to zero mean and unit variance and best-validation
checkpointing on a held-out $15\%$ split.

The polarity is inverted relative to standard autoencoder anomaly detection.
Because the model sees only malicious behavior, \emph{low} reconstruction error
indicates membership in the malicious manifold. Writing
$\varepsilon(x)=\tfrac{1}{d}\|x-f_{\theta^\ast}(x)\|_2^2$ and
$(\mu_\varepsilon,s_\varepsilon)$ for its mean and standard deviation over the
training corpus, the emitted probability is
\begin{equation}
P_{\mathrm{A}}(x) \;=\; \Big(1+\exp\big(z(x)\big)\Big)^{-1},
\qquad
z(x) = \frac{\varepsilon(x)-\mu_\varepsilon}{s_\varepsilon},
\label{eq:aeprob}
\end{equation}
which is decreasing in reconstruction error and bounded in $(0,1)$. Because
$\varepsilon$ decomposes per coordinate as
$\varepsilon_j(x)=(x_j-f_{\theta^\ast}(x)_j)^2$, the model emits a ranked
attribution over the top-$k$ contributing features, satisfying the
interpretability obligations emphasized in the credit-scoring literature
\citep{creditsurvey,credittransparency}.

\begin{remark}[On the latent dimension]
\label{rem:overcomplete}
With $d=10$ inputs and $k=16$ latent units the representation is
\emph{over-complete}: the architecture imposes no dimensional compression and
could in principle realize the identity map. The regularization that makes
\Cref{eq:aeprob} discriminative comes from one-class training---the model is
never shown benign wallets---together with ReLU sparsity and early stopping,
not from a capacity bottleneck. We flag this because the design is easily
misread as compressive, and because it predicts the failure mode observed in
\Cref{sec:ablation}: the detector generalizes to unseen malicious wallets but
carries no explicit pressure to reconstruct benign wallets poorly.
\end{remark}

\begin{figure}[t]
\centering
\begin{tikzpicture}[x=1cm,y=0.40cm,
  unit/.style={circle,draw,minimum size=2.6mm,inner sep=0pt}]

\foreach \i in {1,...,5}
  \node[unit] (in\i)  at (0,{4-\i*1.6}) {};
\foreach \i in {1,...,7}
  \node[unit,fill=black!5]  (h1\i) at (2,{6.4-\i*1.6}) {};
\foreach \i in {1,...,5}
  \node[unit,fill=black!15] (b\i)  at (4,{4-\i*1.6}) {};
\foreach \i in {1,...,7}
  \node[unit,fill=black!5]  (h2\i) at (6,{6.4-\i*1.6}) {};
\foreach \i in {1,...,5}
  \node[unit] (out\i) at (8,{4-\i*1.6}) {};

\foreach \i in {1,...,5}\foreach \j in {1,...,7}
  \draw[gray!30,line width=0.15pt] (in\i) -- (h1\j);
\foreach \i in {1,...,7}\foreach \j in {1,...,5}
  \draw[gray!30,line width=0.15pt] (h1\i) -- (b\j);
\foreach \i in {1,...,5}\foreach \j in {1,...,7}
  \draw[gray!30,line width=0.15pt] (b\i) -- (h2\j);
\foreach \i in {1,...,7}\foreach \j in {1,...,5}
  \draw[gray!30,line width=0.15pt] (h2\i) -- (out\j);

\node[font=\scriptsize,align=center] at (0,-6.6) {input\\$d=10$};
\node[font=\scriptsize,align=center] at (2,-6.6) {ReLU\\$64$};
\node[font=\scriptsize,align=center] at (4,-6.6) {latent\\$k=16$};
\node[font=\scriptsize,align=center] at (6,-6.6) {ReLU\\$64$};
\node[font=\scriptsize,align=center] at (8,-6.6) {output\\$d=10$};

\draw[decorate,decoration={brace,amplitude=4pt},thick]
  (-0.3,5.4) -- (3.9,5.4) node[midway,above=3pt,font=\scriptsize] {encoder $e$};
\draw[decorate,decoration={brace,amplitude=4pt},thick]
  (4.1,5.4) -- (8.3,5.4) node[midway,above=3pt,font=\scriptsize] {decoder $g$};

\node[align=left,font=\scriptsize,anchor=north west] at (8.9,3.2)
  {trained on \textbf{malicious only}\\[3pt]
   low $\varepsilon(x)\Rightarrow$ malicious\\[3pt]
   $k>d$: over-complete\\(\Cref{rem:overcomplete})};

\end{tikzpicture}
\caption{One-class reconstruction detector. The latent width exceeds the input
width, so discrimination derives from one-class training rather than
compression.}
\label{fig:ae}
\end{figure}

\paragraph{Detector B: isolation forest.}
An isolation forest \citep{isolationforest} operates on the same feature space,
scoring points by the average path length $E[h(x)]$ required to isolate them
under random recursive partitioning, normalized by the expected path length
$c(n)$ of an unsuccessful binary-search over $n$ points:
$\mathrm{iso}(x) = 2^{-E[h(x)]/c(n)}$. Trained without labels, it captures
statistical outlier-ness independent of any previously observed archetype. We
map its raw outlier score $r(x)$ to a comparable scale by
\begin{equation}
P_{\mathrm{B}}(x) \;=\; \Big(1+\exp\big(-(r(x)-\bar{r})/s_r\big)\Big)^{-1}.
\label{eq:ifprob}
\end{equation}
\Cref{sec:ablation} establishes that the population on which this detector is
fit determines whether \Cref{eq:ifprob} is informative or actively misleading.

\paragraph{Fusion.} The two probabilities combine convexly,
\begin{equation}
P(x) \;=\; \lambda\,P_{\mathrm{A}}(x) \;+\; (1-\lambda)\,P_{\mathrm{B}}(x),
\qquad \lambda \in [0,1].
\label{eq:fusion}
\end{equation}
The production value of $\lambda$ is calibrated against held-out adversarial and
legitimate-user scenarios and is withheld. \Cref{sec:ablation} reports the full
sweep over $\lambda$ from our own replication.

The detectors fail in complementary directions. A model trained on known attacks
misses genuinely novel automation that reconstructs unremarkably; an isolation
forest misses bots tuned to sit inside the population's typical parameter
ranges. An attacker must simultaneously avoid resembling every known malicious
archetype \emph{and} avoid statistical distinguishability from organic behavior.

\subsubsection{Layer 3: Post-distribution memory}

A wallet that liquidates the majority of received rewards within a short window
of distribution carries that record into subsequent allocation cycles. Future
eligibility is adjusted; rewards are not retroactively clawed back. For an
adversary operating at scale, every liquidation trades short-term extraction
against degraded positioning in every future allocation window referencing the
same behavioral record, which converts single-campaign optimization into a
multi-campaign constraint.

\subsubsection{Layer 4: Cross-wallet sybil clustering}
\label{sec:sybil}

Per-wallet detectors are blind to relationships between wallets. An operation
splitting activity across many addresses, each individually unremarkable, passes
per-wallet scrutiny entirely. The clustering layer operates on the funding
provenance graph (\Cref{fig:sybil}).

Let $\phi(w)$ denote the address that first funded wallet $w$. Candidate
families are the fibers $F_a = \{w : \phi(w) = a\}$ with $|F_a| \ge 2$. High-degree
funders corresponding to exchange and bridge infrastructure are excluded before
grouping; in the validation scan below, $39$ such funders were removed. Within
each surviving family, behavioral uniformity is scored by the mean coefficient
of variation across transaction-signature attributes $\mathcal{A}$ (cadence,
value distribution, protocol mix):
\begin{equation}
\CV(F) \;=\; \frac{1}{|\mathcal{A}|}\sum_{a \in \mathcal{A}}
\frac{\varsigma_a(F)}{\bar{m}_a(F)},
\label{eq:cv}
\end{equation}
where $\varsigma_a$ and $\bar{m}_a$ are the standard deviation and mean of
attribute $a$ across the family's wallets. Low $\CV$ indicates that
independently-operated wallets would not plausibly exhibit such uniformity;
shared funding provenance combined with low $\CV$ yields high-confidence sybil
classification. A family-level risk score composes $\CV(F)$, $|F_a|$, and a
funder-validation gate over four structural metrics.
\Cref{alg:sybil} states the procedure.

\begin{algorithm}[t]
\caption{Sybil family resolution over the funding-provenance graph}
\label{alg:sybil}
\begin{algorithmic}[1]
\Require wallet set $\mathcal{W}$; first-funder map $\phi$; degree bound
$D_{\max}$; uniformity bound $\CV_{\max}$; attribute set $\mathcal{A}$
\Ensure sybil families with risk scores
\State $\mathcal{F} \gets \emptyset$
\State $A \gets \{\phi(w) : w \in \mathcal{W}\}$
  \Comment{candidate funders}
\For{$a \in A$}
  \State $F_a \gets \{w \in \mathcal{W} : \phi(w) = a\}$
  \If{$|F_a| > D_{\max}$}
    \State \textbf{continue}
      \Comment{exchange or bridge hub; excluded before grouping}
  \EndIf
  \If{$|F_a| \ge 2$}
    \State $\mathcal{F} \gets \mathcal{F} \cup \{F_a\}$
  \EndIf
\EndFor
\For{$F \in \mathcal{F}$}
  \State $\CV(F) \gets \frac{1}{|\mathcal{A}|}\sum_{a \in \mathcal{A}}
         \varsigma_a(F) / \bar{m}_a(F)$
    \Comment{\Cref{eq:cv}}
  \State $g(F) \gets$ number of structural metrics passed by funder of $F$
  \State $\mathrm{risk}(F) \gets \Psi\big(\CV(F),\, |F|,\, g(F)\big)$
    \Comment{$\Psi$ decreasing in $\CV$, increasing in $|F|$ and $g$}
  \State \textbf{label} $F$ as sybil if $\CV(F) \le \CV_{\max}$ and $g(F)$
         meets the validation gate
\EndFor
\State \Return $\{(F, \mathrm{risk}(F)) : F \in \mathcal{F}\}$
\end{algorithmic}
\end{algorithm}

\begin{figure}[t]
\centering
\begin{tikzpicture}[
  fnd/.style={draw,circle,minimum size=6.5mm,inner sep=0pt,font=\scriptsize,fill=black!12},
  wal/.style={draw,circle,minimum size=4.5mm,inner sep=0pt,font=\tiny,fill=white},
  exch/.style={draw,circle,minimum size=7mm,inner sep=0pt,font=\scriptsize,fill=black!30},
  e/.style={-{Latex[length=1.6mm]},gray!70}
]

\node[fnd] (f1) at (0,0) {$a_1$};
\foreach \i/\y in {1/1.5, 2/0.5, 3/-0.5, 4/-1.5}
  \node[wal] (w1\i) at (2,\y) {};
\foreach \i in {1,...,4} \draw[e] (f1) -- (w1\i);
\begin{scope}[on background layer]
\node[draw,dashed,rounded corners=3pt,fit=(w11)(w14),inner sep=2.5mm,fill=black!4] (fam1) {};
\end{scope}
\node[font=\scriptsize,align=center,below=1mm of fam1]
  {family $F_{a_1}$\\$\CV=0.36$ (low)};

\node[fnd] (f2) at (5.6,0) {$a_2$};
\foreach \i/\y in {1/1.1, 2/0.0, 3/-1.1}
  \node[wal] (w2\i) at (7.6,\y) {};
\foreach \i in {1,...,3} \draw[e] (f2) -- (w2\i);
\begin{scope}[on background layer]
\node[draw,dashed,rounded corners=3pt,fit=(w21)(w23),inner sep=2.5mm,fill=black!4] (fam2) {};
\end{scope}
\node[font=\scriptsize,align=center,below=1mm of fam2]
  {family $F_{a_2}$\\$\CV=0.39$ (low)};

\node[exch] (ex) at (10.9,0.4) {CEX};
\foreach \y in {1.5,0.9,0.3,-0.3,-0.9,-1.5}
  \draw[e,gray!45] (ex) -- (12.6,\y);
\node[font=\scriptsize,align=center] at (11.5,-2.05) {high-degree funder\\\emph{excluded}};

\end{tikzpicture}
\caption{Funding-provenance clustering. Wallets sharing a first funder form
candidate families; behavioral uniformity within a family
(\Cref{eq:cv}) separates coordinated operations from coincidence.
High-degree funders representing exchange and bridge infrastructure are removed
before grouping to prevent spurious families.}
\label{fig:sybil}
\end{figure}

\subsubsection{Graduated penalty structure}

The four layers feed a penalty multiplier anchored to four qualitative tiers.
Wallets in the lowest tier receive full attribution. Wallets exhibiting
ambiguous behavioral signals receive no penalty. Penalties activate only when
multiple strong signals fire jointly, producing approximately a $50\%$ reduction
in the third tier and approximately $80\%$ in the fourth. Numeric breakpoints
are withheld. \Cref{fig:penalty} contrasts the resulting response with binary
exclusion: a legitimate wallet misclassified by one tier receives a proportional
reduction rather than losing everything.

\begin{figure}[t]
\centering
\begin{tikzpicture}
\begin{axis}[
  width=0.72\textwidth, height=5.2cm,
  xlabel={aggregate suspicion signal $\rightarrow$},
  ylabel={retained attribution $\mu(w)$},
  xmin=0, xmax=4, ymin=-0.05, ymax=1.12,
  xtick={0.5,1.5,2.5,3.5},
  xticklabels={tier 1, tier 2, tier 3, tier 4},
  ytick={0,0.2,0.5,1.0},
  yticklabels={$0$,$0.2$,$0.5$,$1.0$},
  tick label style={font=\scriptsize}, label style={font=\small},
  legend style={font=\scriptsize, at={(0.02,0.06)}, anchor=south west, draw=none, fill=white},
  grid=major, grid style={gray!18},
  axis lines=left,
]
\addplot[thick, const plot, mark=none, black]
  coordinates {(0,1.0) (1,1.0) (2,0.5) (3,0.2) (4,0.2)};
\addlegendentry{graduated (ZAPs)}
\addplot[thick, const plot, mark=none, black, dashed]
  coordinates {(0,1.0) (1,1.0) (2,0.0) (3,0.0) (4,0.0)};
\addlegendentry{binary exclusion}
\end{axis}
\end{tikzpicture}
\caption{Graduated versus binary penalty response. Under binary exclusion a
one-tier misclassification is total loss; the graduated curve bounds
false-positive harm to the tier increment.}
\label{fig:penalty}
\end{figure}

\section{Results}
\label{sec:results}

\subsection{Reference Deployment Statistics}

\Cref{tab:deployment} summarizes distributional properties of the reference
deployment; \Cref{tab:percentiles} reports the reward distribution across
percentiles.

\begin{table}[t]
\centering
\caption{Reference deployment summary.}
\label{tab:deployment}
\begin{tabular}{lr}
\toprule
Metric & Value \\
\midrule

Wallets scored & 320M+ \\
Aggregate volume & \$300B \\
Total reward units distributed & 16{,}080 \\
Gini coefficient & 0.718 \\
Top-10-wallet reward share & 0.4\% \\
Top-100-wallet reward share & 2.8\% \\
Distinct score value ratio & 63.4\% \\
P99 / P50 reward ratio & 36.6$\times$ \\
\bottomrule
\end{tabular}
\end{table}

\begin{table}[t]
\centering
\caption{Reward distribution percentiles.}
\label{tab:percentiles}
\begin{tabular}{lrrrrrrrr}
\toprule
Percentile & P10 & P25 & P50 & P75 & P90 & P95 & P99 & Max \\
\midrule
Reward & 0.003 & 0.010 & 0.084 & 0.689 & 1.843 & 2.307 & 3.089 & 6.597 \\
\bottomrule
\end{tabular}
\end{table}

Each headline statistic verifies a distinct design property. The Gini
coefficient of $0.718$ sits inside the band treated as healthy for incentive
distributions: below it, rewards fail to differentiate contribution levels;
above it, rewards collapse into concentration. Perfect equality would itself be
a design failure, since the objective is proportionate inequality rather than
flatness.

Top-decile concentration is the direct read-out of the percentile cap in
\Cref{eq:volume}. The ten largest recipients hold $0.4\%$ of all rewards and the
top hundred hold $2.8\%$---figures that would be one to two orders of magnitude
higher under uncapped volume-proportional attribution on a population of this
size. The distinct-score-value ratio of $63.4\%$ confirms a fine-grained
continuous gradient rather than bucket collapse; a bracketed scheme over the same
population would produce a ratio near zero. The P99/P50 spread of $36.6\times$
quantifies what the cap permits: a top-percentile participant out-earns the
median by roughly $37\times$, enough that genuine contribution is visibly worth
making, bounded enough that the median reward remains non-trivial.

\subsection{Sector-Level Distribution}

\Cref{tab:sectors} reports volume, sector weighting, and reward share across the
five sectors.

\begin{table}[t]
\centering
\caption{Sector-level volume, weighting, and reward shares in the reference
deployment.}
\label{tab:sectors}
\begin{tabular}{lrrr}
\toprule
Sector & Volume & Sector weight $V_\sigma/V_{\Ecos}$ & Reward share \\
\midrule
DEX swap    & \$35.3B & 0.8570 & 93.8\% \\
Lending     & \$4.2B  & 0.1015 & 2.8\%  \\
Perpetuals  & \$1.1B  & 0.0279 & 1.8\%  \\
NFT         & \$369M  & 0.0090 & 0.7\%  \\
Staking     & \$191M  & 0.0047 & 0.9\%  \\
\bottomrule
\end{tabular}
\end{table}

Reward shares track volume shares closely by construction---the property
\Cref{eq:weight} exists to enforce. One deployment observation illustrates its
bite. The protocol with the largest wallet count in the dataset, over half of all
scored wallets, received $1.3\%$ of total rewards, because its volume share
within its sector was $0.6\%$. Under wallet-count or transaction-count
attribution the same protocol would have dominated the distribution outright.
Head-count is the cheapest metric to manufacture, and \Cref{prop:niche} prices
it accordingly.

\subsection{Detector Evaluation on Labeled Malicious Wallets}
\label{sec:detector-eval}

We evaluate the ensemble of \Cref{sec:ensemble} on a corpus of $1{,}073$
labeled malicious wallets spanning $124{,}638$ transactions, drawn from public
exploit and phishing attributions, against a benign comparison set of $383$
wallets spanning $326{,}943$ transactions. The reconstruction model is retrained
one-class on a $70\%$ fold of the malicious corpus; the held-out $30\%$
($322$ wallets) plus the full benign set form the evaluation set, so no
evaluated malicious wallet was seen in training. All figures are means over five
random seeds. \Cref{tab:detectors} reports the result.

\begin{table}[t]
\centering
\caption{Detector performance on held-out malicious wallets. The isolation
forest is reported under both training regimes. ROC-AUC, mean $\pm$ s.d.\ over
five seeds; $322$ held-out malicious and $383$ benign wallets.}
\label{tab:detectors}
\begin{tabular}{llcc}
\toprule
Detector & Training regime & ROC-AUC & $\mathrm{corr}(P_{\mathrm{A}},P_{\mathrm{B}})$ \\
\midrule
Reconstruction model $P_{\mathrm{A}}$ & one-class, malicious only & $0.891 \pm 0.016$ & --- \\
Isolation forest $P_{\mathrm{B}}$     & pooled population         & $0.250 \pm 0.015$ & $-0.686$ \\
Isolation forest $P_{\mathrm{B}}$     & benign population         & $0.638 \pm 0.011$ & $-0.217$ \\
\midrule
Ensemble, $\lambda=1.00$              & (equivalent to $P_{\mathrm{A}}$ alone) & $0.891 \pm 0.016$ & --- \\
Ensemble, $\lambda^\ast$, pooled fit  & best over sweep           & $0.891 \pm 0.016$ & --- \\
\textbf{Ensemble, $\lambda^\ast=0.90$, benign fit} & best over sweep & $\mathbf{0.923 \pm 0.013}$ & --- \\
\bottomrule
\end{tabular}
\end{table}

Three findings follow. First, the one-class reconstruction model is the stronger
individual detector at $0.891$ ROC-AUC, consistent with its access to labeled
attack archetypes. Second, the isolation forest's training regime is
outcome-determining. Fit on the pooled population it achieves $0.250$---far
below chance, meaning statistical outlier-ness \emph{inversely} predicts
maliciousness. The malicious wallets outnumber the benign ones in the pooled
corpus and are individually sparser (median $21$ transactions against $499$), so
they define the dense region of the feature space and the genuinely
heavy-activity benign wallets are isolated as outliers. Fit on the benign
population alone, as a proper normality model, the same detector reaches
$0.638$ with correct polarity. Third, only the benign-fit configuration produces
an ensemble that beats its strongest member: $0.923$ against $0.891$, a gain of
$+0.032$ ROC-AUC at $\lambda^\ast = 0.90$. Under the pooled fit the optimum sits
at $\lambda = 1.00$, which discards the second detector entirely.

\Cref{fig:sweep} plots the full sweep. The low correlation between detectors
under the benign fit ($-0.217$) is what makes the ensemble gain available; the
strong negative correlation under the pooled fit ($-0.686$) reflects the
polarity inversion rather than complementarity.

\begin{figure}[t]
\centering
\begin{tikzpicture}
\begin{axis}[
  width=0.78\textwidth, height=6.2cm,
  xlabel={fusion weight $\lambda$ \ \ (weight on reconstruction model $P_{\mathrm{A}}$)},
  ylabel={ROC-AUC},
  xmin=0, xmax=1, ymin=0.18, ymax=0.99,
  xtick={0,0.2,0.4,0.6,0.8,1.0},
  ytick={0.2,0.4,0.6,0.8,1.0},
  tick label style={font=\scriptsize}, label style={font=\small},
  legend style={font=\scriptsize, at={(0.98,0.03)}, anchor=south east,
                draw=gray!40, fill=white, row sep=1pt},
  grid=major, grid style={gray!18},
]
\addplot[thick, black, mark=*, mark size=1.1pt] coordinates {
(0.00,0.6380)(0.05,0.6647)(0.10,0.6897)(0.15,0.7136)(0.20,0.7366)
(0.25,0.7624)(0.30,0.7913)(0.35,0.8193)(0.40,0.8459)(0.45,0.8683)
(0.50,0.8861)(0.55,0.8996)(0.60,0.9051)(0.65,0.9096)(0.70,0.9135)
(0.75,0.9156)(0.80,0.9190)(0.85,0.9219)(0.90,0.9233)(0.95,0.9224)
(1.00,0.8913)};
\addlegendentry{isolation forest fit on benign population}
\addplot[thick, black, dashed, mark=square, mark size=1.1pt] coordinates {
(0.00,0.2504)(0.05,0.2580)(0.10,0.2677)(0.15,0.2807)(0.20,0.3002)
(0.25,0.3419)(0.30,0.4049)(0.35,0.4386)(0.40,0.4760)(0.45,0.5220)
(0.50,0.5915)(0.55,0.6759)(0.60,0.7123)(0.65,0.7418)(0.70,0.7664)
(0.75,0.7910)(0.80,0.8097)(0.85,0.8248)(0.90,0.8358)(0.95,0.8658)
(1.00,0.8913)};
\addlegendentry{isolation forest fit on pooled population}
\addplot[gray, dotted, thick] coordinates {(0,0.8913)(1,0.8913)};
\addlegendentry{$P_{\mathrm{A}}$ alone}
\end{axis}
\end{tikzpicture}
\caption{Ensemble ROC-AUC across the fusion weight $\lambda$ of
\Cref{eq:fusion}. Only the benign-fit configuration rises above the
single-detector baseline, peaking at $\lambda=0.90$. Under the pooled fit the
sweep is monotone toward $\lambda=1$, so the second detector contributes
nothing.}
\label{fig:sweep}
\end{figure}

\subsection{Adversarial Robustness}

The detection and penalty stack is evaluated on eight constructed scenarios:
four adversarial archetypes drawn from observed farming patterns and four
legitimate-user controls. Each is scored through the full pipeline with and
without the adversarial defenses activated. \Cref{tab:adversarial} reports
absolute attribution and the resulting change.

\begin{table}[t]
\centering
\caption{Reward capture under adversarial and control scenarios, before and
after activation of the four-layer defense stack.}
\label{tab:adversarial}
\begin{tabular}{lrrr}
\toprule
Scenario & Undefended & Defended & Change \\
\midrule
\multicolumn{4}{l}{\emph{Adversarial archetypes}} \\
Diversity farmer (spread across spam protocols) & 10.53 & 1.11 & $-89\%$ \\
Spam bot (high-frequency, low-value, uniform) & 15.96 & 1.62 & $-90\%$ \\
Flash-loan exploiter (single-block round-trip) & 18.56 & 10.65 & $-43\%$ \\
Sybil operator (coordinated small-volume wallets) & 211.50 & 148.80 & $-30\%$ \\
\midrule
\multicolumn{4}{l}{\emph{Legitimate controls}} \\
Diversified multi-protocol holder & 52.38 & 53.11 & $+1\%$ \\
High-volume trader & 88.31 & 87.49 & $-1\%$ \\
Derivatives trader & 36.41 & 34.21 & $-6\%$ \\
NFT collector & 25.01 & 23.04 & $-8\%$ \\
\bottomrule
\end{tabular}
\end{table}

The headline property is asymmetry. Every adversarial archetype loses
double-digit percentages of reward capture, the two purest automation patterns
losing approximately $90\%$, while no legitimate control shifts by more than
$8\%$ and the most common legitimate profile is effectively untouched. This
asymmetry, rather than raw detection rate, is the correct figure of merit for an
attribution system: a mitigation layer that halved bot rewards while halving
power-user rewards would be a failure regardless of its recall.

The gradient across adversarial scenarios carries structural information. The
spam bot and diversity farmer are single-wallet strategies whose behavioral
signatures---cadence without idle periods, value uniformity, protocol
concentration---fall squarely within what the ensemble detects. The flash-loan
exploiter retains more ($-43\%$) because its distinguishing behavior
concentrates in a small number of high-value events rather than a sustained
pattern. The sybil operator retains the most ($-30\%$): each constituent wallet
viewed in isolation resembles a modest legitimate participant, and only the
clustering layer of \Cref{sec:sybil} reaches it.

\subsection{Sybil Clustering Validation}

A validation scan over $500$ wallets resolved $32$ candidate families spanning
$97$ wallets, after excluding $39$ high-degree first funders corresponding to
exchange and bridge infrastructure. Nine families passed the four-metric funder
validation gate. \Cref{tab:sybil} reports the family-size distribution and
associated risk scores.

\begin{table}[t]
\centering
\caption{Sybil families resolved in a 500-wallet validation scan. Risk score
composes family size, funder validation, and the uniformity statistic of
\Cref{eq:cv}.}
\label{tab:sybil}
\begin{tabular}{rrrr}
\toprule
Family size & Families & Wallets & Risk-score range \\
\midrule
2  & 17 & 34 & 0.39 -- 0.57 \\
3  & 7  & 21 & 0.42 -- 0.60 \\
4  & 4  & 16 & 0.42 -- 0.52 \\
5  & 1  & 5  & 0.63 \\
6  & 2  & 12 & 0.45 -- 0.55 \\
8  & 1  & 8  & 0.45 \\
\midrule
\textbf{Total} & \textbf{32} & \textbf{97} & 0.39 -- 0.63 \\
\bottomrule
\end{tabular}
\end{table}

Family sizes skew small: $17$ of $32$ families are wallet pairs. The largest
resolved family spans eight wallets. Risk scores concentrate in a narrow band
($0.39$--$0.63$), consistent with families that share funding provenance and
exhibit partial behavioral uniformity rather than mechanical duplication. One
two-wallet family moved approximately $\$191$M inbound against $\$191$M outbound
with a low DeFi-activity share, a near-perfect pass-through signature more
consistent with exchange infrastructure than with a farming operation; cases of
this shape are the reason the funder-exclusion step precedes grouping.

\subsection{Bot Classification Population Breakdown}

\Cref{tab:tiers} reports the distribution of wallets across the four graduated
penalty tiers.

\begin{table}[t]
\centering
\caption{Classification tiers over the reference population.}
\label{tab:tiers}
\begin{tabular}{lrrrr}
\toprule
Tier & Wallets & Share & Penalty & Aggregate rewards \\
\midrule
Normal     & 10{,}366 & 33\% & none        & 465 \\
Suspicious & 13{,}819 & 44\% & none        & 13{,}547 \\
Likely bot & 7{,}532  & 24\% & $\sim$50\%  & 2{,}067 \\
Hard bot   & 0        & 0\%  & $\sim$80\%  & --- \\
\bottomrule
\end{tabular}
\end{table}

Two observations anchor the interpretation. First, the Suspicious tier, flagged
but unpenalized, carries $84\%$ of all distributed rewards. This is the
no-penalty design decision doing its intended work. The wallets
generating the most substantial genuine activity are precisely those whose
operational regularity trips partial behavioral signals: power users, market
makers, and institutional accounts whose disciplined patterns resemble
automation. A penalty policy triggered on partial evidence would have
confiscated the bulk of legitimately earned rewards.

Second, the empty Hard Bot tier reflects the reference population rather than a
gap in the mechanism. No wallet fired extreme signals across every detector
simultaneously. The tier exists as the asymptotic bound the graduated penalty
curve applies when one does.

\subsection{Production Campaign Results}

In live token-distribution campaigns deploying the full stack against real
adversarial populations, the framework recorded a $56\%$ reduction in sybil
allocation, a $49\%$ increase in quality-wallet participation, and a $50\%$
reduction in post-distribution sell pressure, measured against comparable
unmitigated campaigns. Allocation to low-quality wallets fell from $21.0\%$ to
$13.5\%$ of the distributed pool while allocation to high-quality wallets rose
from $41.7\%$ to $54.7\%$, and mean post-distribution holding duration extended
from $9.8$ to $12.6$ days.

\subsection{Computational Cost}

Scoring is linear in transaction count per wallet and parallelizes across
wallets without coordination, a property inherited from the deterministic
hashing of \Cref{sec:behavioral}. A wallet carrying $40{,}000$ transactions
scores in $0.295$\,s. Across a $1{,}400$-wallet synthetic population, total
runtime was $2.48$\,s at a mean of $1.77$\,ms and a P95 of $3.19$\,ms per
wallet.

\section{Ablation Studies}
\label{sec:ablation}

\subsection{Normalization Reference}

\Cref{eq:volume} anchors the volume score to a high population percentile. Two
alternatives fail in opposite directions. Under max-normalization the reference
becomes a function of the single largest participant, so a lone anomalous
wallet---or a pricing artifact surviving the ingestion ceiling---compresses
every other score toward zero and the distinct-score ratio collapses. Under
mean-normalization the reference sits inside the bulk of a heavy-tailed
distribution, so the clip binds for a large fraction of participants, scores
pile at the cap, and differentiation across the majority is lost. The percentile
anchor is the configuration for which neither degeneracy occurs: robust to
individual outliers, far enough into the tail that saturation reaches only
genuine whales.

\subsection{Ensemble Composition}

\Cref{tab:detectors} and \Cref{fig:sweep} constitute a leave-one-detector-out ablation over
\Cref{eq:fusion}, since $\lambda=1$ and $\lambda=0$ recover the individual
detectors. Removing the isolation forest ($\lambda=1$) costs $0.032$ ROC-AUC
under the benign-fit configuration. Removing the reconstruction model
($\lambda=0$) costs $0.285$. The reconstruction model is the load-bearing
component; the isolation forest is a genuine but secondary contributor, and only
under the correct training regime.

The training-regime result deserves emphasis as a finding in its own right.
Isolation forests are routinely described as unsupervised detectors that may be
fit on whatever data is available. In this setting that framing is actively
harmful: pooled fitting produced a detector at $0.250$ ROC-AUC, which is worse
than the trivial constant predictor and would degrade any ensemble it entered.
The corrective condition is that the forest must model the \emph{benign}
population specifically, so that outlier-ness and maliciousness point in the
same direction. Any deployment that assembles its unlabeled training pool from
production traffic---which necessarily contains the attackers---inherits this
inversion in proportion to the attacker prevalence.

\subsection{Penalty Shape}

Replacing the graduated multiplier with binary exclusion at the third tier
(\Cref{fig:penalty}) would remove $2{,}067$ reward units of Likely Bot
attribution instead of approximately half that. The relevant cost is borne by
false positives. Under the deployment distribution of \Cref{tab:tiers}, a
legitimate wallet misclassified into the third tier retains $50\%$ of its
attribution under the graduated curve and $0\%$ under exclusion. Given that the
Suspicious tier---adjacent to Likely Bot and explicitly unpenalized---carries
$84\%$ of all distributed rewards, the population most exposed to a one-tier
misclassification is also the population holding the majority of legitimate
value.

\section{Limitations and Future Work}
\label{sec:limitations}

\paragraph{Quality dimension is static in the reference deployment.} The
$q_{w,p}$ term of \Cref{eq:composite} is held at a calibrated baseline for all
users in the deployment reported here, so \Cref{eq:composite} effectively
reduces to a two-signal composite plus a constant floor. Sector-specific quality
signals---repayment discipline, realized-PnL character, token-quality mix---are
specified in \Cref{sec:behavioral} and integrated in the behavioral engine, but
the reported distributional statistics do not yet reflect them.

\paragraph{Benign comparison set is small and activity-skewed.} The evaluation
of \Cref{sec:detector-eval} uses $383$ benign wallets with a median of $499$
transactions, against malicious wallets with a median of $21$. The benign sample
therefore over-represents high-activity participants. This skew is precisely
what drives the polarity inversion under pooled fitting, and while the direction
of the effect is robust across seeds and forest sizes, its magnitude should be
expected to depend on the benign sampling frame. A representative benign panel
would sharpen the estimate of the isolation forest's standalone contribution.

\paragraph{Labeled malicious corpus reflects exploit attribution, not farming.}
The $1{,}073$-wallet corpus is drawn from public exploit and phishing
attributions. Airdrop farming and exploit activity are related but distinct
behavioral classes, so \Cref{tab:detectors} measures transfer to a proxy
population rather than direct farming detection. The scenario results of
\Cref{tab:adversarial} address farming archetypes directly but are constructed
rather than observed.

\paragraph{Post-distribution memory is unquantified.} The mechanism of
Layer 3 is specified qualitatively. Neither the liquidation-share threshold, the
observation window, nor the decay of the eligibility adjustment across cycles is
characterized here, and no ablation isolates its contribution to the production
results.

\paragraph{Cross-chain identity.} The clustering layer of \Cref{sec:sybil}
operates on single-chain funding provenance. An operation that funds wallets
across chains, or through bridges whose deposit addresses are excluded as
high-degree funders, evades first-funder grouping. Cross-chain identity
resolution is the most consequential open extension.

\paragraph{Detector drift.} Both detectors are fit offline. An adaptive
adversary can probe the deployed penalty structure and migrate behavior toward
regions that reconstruct poorly and appear statistically ordinary. Online
retraining against observed behavioral drift, and periodic recalibration of
$\lambda$, are required for the asymmetry of \Cref{tab:adversarial} to persist.

\section{Conclusion}
\label{sec:conclusion}

We presented ZAPs, a reward-attribution framework for DeFi ecosystems that
treats adversarial robustness as a design requirement rather than a retrofit.
The framework composes a percentile-normalized activity score that bounds whale
dominance while preserving differentiation; a nested cross-domain weighting
mechanism for which \Cref{prop:niche} establishes that extractable attribution
at any protocol is bounded by that protocol's global volume share, closing the
niche-protocol exploit structurally rather than heuristically; a four-layer
adversarial detection stack feeding graduated penalties that bound
false-positive harm; and an attribution algebra composing these into a single
per-wallet reward.

On a held-out evaluation over $1{,}073$ labeled malicious wallets, the parallel
anomaly ensemble reaches $0.923 \pm 0.013$ ROC-AUC against $0.891 \pm 0.016$ for
the reconstruction model alone. That gain is conditional on a training-regime
requirement we believe has not been reported: an isolation forest fit on the
pooled population rather than the benign population inverts its discriminative
polarity, scoring at $0.250$ and contributing nothing to any ensemble. Under
adversarial simulation the mitigation stack removes $30$--$90\%$ of attacker
reward capture at a cost of $1$--$8\%$ to legitimate-user scenarios. Live
production campaigns recorded a $56\%$ reduction in sybil allocation, a $49\%$
increase in quality-wallet participation, and a $50\%$ reduction in sell
pressure.

Reward attribution, airdrop eligibility, sybil-resistant incentive design, and
composable on-chain reputation reduce to the same underlying question---what did
this wallet actually contribute?---and the framework described here is one
deployed, measured answer to it.


\end{document}